\newcommand{\Prr}[1]{\underset{\ #1}{\Pr}}
\DeclareMathAlphabet\mathbfcal{OMS}{cmsy}{b}{n}
\newcommand{\ignore}[1]{}
\renewcommand{\Pr}{{\bf Pr}}
\newcommand{\E}{{\bf E}}
\newcommand{\bQ}{\boldsymbol{Q}}
\newcommand{\bcQ}{\mathbfcal{Q}}
\newcommand{\bphi}{\boldsymbol{\phi}}
\newcommand{\bT}{{\mathbf T}}
\begin{document}
\title{Improved Lower Bound for Estimating the Number of Defective Items}
%\author{Nader H. Bshouty}
%\institute{Technion, Haifa, Israel\\
%bshouty@ca.technion.ac.il}
\author{{\bf Nader H. Bshouty} \\ Dept. of Computer Science\\ Technion,  Haifa\\ bshouty@cs.technion.ac.il}
\institute{}
\maketitle

\begin{abstract}
Let $X$ be a set of items of size $n$ that contains some defective items, denoted by $I$, where $I \subseteq X$. In group testing, a {\it test} refers to a subset of items $Q \subset X$. The outcome of a test is $1$ if $Q$ contains at least one defective item, i.e., $Q\cap I \neq \emptyset$, and $0$ otherwise.

We give a novel approach to obtaining lower bounds in non-adaptive randomized group testing. The technique produced lower bounds that are within a factor of $1/{\log\log\stackrel{k}{\cdots}\log n}$ of the existing upper bounds for any constant~$k$. Employing this new method, we can prove the following result.

For any fixed constants $k$, any non-adaptive randomized algorithm that, for any set of defective items $I$, with probability at least $2/3$, returns an estimate of the number of defective items $|I|$ to within a constant factor requires at least 
$$\Omega\left(\frac{\log n}{\log\log\stackrel{k}{\cdots}\log n}\right)$$ tests. 

Our result almost matches the upper bound of $O(\log n)$ and solves the open problem posed by Damaschke and Sheikh Muhammad in~\cite{DamaschkeM10,DamaschkeM10b}. Additionally, it improves upon the lower bound of $\Omega(\log n/\log\log n)$ previously established by Bshouty~\cite{Bshouty19}.
\end{abstract}

\section{Introduction}
Let $X$ be a set of $n$ items, among which are defective items denoted by $I \subseteq X$. In the context of group testing, a {\it test} is a subset $Q\subseteq X$ of items, and its result is $1$ if $Q$ contains at least one defective item (i.e., $Q\cap I\not=\emptyset$), and $0$ otherwise.

Although initially devised as a cost-effective way to conduct mass blood testing \cite{D43}, group testing has since been shown to have a broad range of applications. These include DNA library screening \cite{ND00}, quality control in product testing \cite{SG59}, file searching in storage systems \cite{KS64}, sequential screening of experimental variables \cite{L62}, efficient contention resolution algorithms for multiple-access communication \cite{KS64,W85}, data compression \cite{HL02}, and computation in the data stream model \cite{CM05}. Additional information about the history and diverse uses of group testing can be found in \cite{Ci13,DH00,DH06,H72,MP04,ND00} and their respective references.

{\it Adaptive} algorithms in group testing employ tests that rely on the outcomes of previous tests, whereas {\it non-adaptive} algorithms use tests independent of one another, allowing all tests to be conducted simultaneously in a single step. Non-adaptive algorithms are often preferred in various group testing applications~\cite{DH00,DH06}.

In this paper, we give a novel approach to obtaining lower bounds in non-adaptive group testing. The technique produced lower bounds that are within a factor of $1/{\log\log\stackrel{k}{\ldots}\log n}$ of the existing upper bounds for any constant $k$. Employing this new method, we can prove a new lower bound for non-adaptive estimation of the number of defective items.

Estimating the number of defective items $d:=|I|$ to within a constant factor of $\alpha$ is the problem of identifying an integer $D$ that satisfies $d\le D\le \alpha d$. This problem is widely utilized in a variety of applications~\cite{ChenS90,Swallow85,Thompson62,WalterHB80,GatwirthH89}. 

Estimating the number of defective items in a set $X$ has been extensively studied, with previous works including~\cite{BB18,ChengX14,DamaschkeM10,DamaschkeM10b,FalahatgarJOPS16,RonT14}. 
In this paper, we focus specifically on studying this problem in the non-adaptive setting. Bshouty~\cite{Bshouty19} showed that deterministic algorithms require at least $\Omega(n)$ tests to solve this problem. For randomized algorithms,
Damaschke and Sheikh Muhammad ~\cite{DamaschkeM10b} presented a non-adaptive randomized algorithm that makes $O(\log n)$ tests and, with high probability,
returns an integer $D$ such that $D\ge d$ and $\E[D]=O(d)$. Bshouty \cite{Bshouty19} proposed a polynomial time randomized algorithm that makes $O(\log n)$ tests and, with probability at least $2/3$, returns an estimate of the number of defective items within a constant factor. Damaschke and Sheikh Muhammad \cite{DamaschkeM10b} gave the lower bound of $\Omega(\log n)$; however, this result holds only for algorithms that select each item in each test uniformly and independently with some fixed probability. They conjectured that any randomized algorithm with a constant failure probability also requires $\Omega(\log n)$ tests. Bshouty \cite{Bshouty19} proves this conjecture up to a factor of $\log\log n$. In this paper, we establish a lower bound of  $$\Omega\left(\frac{\log n}{(c\log^* n)^{(\log^*n)+1}}\right)$$ tests, where $c$ is a constant and $\log^*n$ is the smallest integer $k$ such that $\log\log \stackrel{k}{\ldots}\log n<2$.
It follows that the lower bound is
$$\Omega\left(\frac{\log n}{\log\log \stackrel{k}{\ldots}\log n}\right)$$
for any constant $k$.

An attempt was made to establish this bound in~\cite{Bshouty18}; however, an error was discovered in the proof. As a result, the weaker bound of Bshouty, $\Omega(\log n/\log\log n)$, was proved and published in~\cite{Bshouty19}.

The paper is organized as follows: The next subsection introduces the technique used to prove the lower bound. Section 2 defines the notation and terminology used throughout the paper. Section 3 contains the proof of the lower bound.

\subsection{Old and New Techniques}

In this section, we will explain both the old and new techniques used to prove the lower bounds.

Let $X=[n]$ be the set of items, and let ${\cal I}=2^{X}$ be the collection of all the possible sets of defective items. The objective is to establish a lower bound for the test complexity of any non-adaptive randomized algorithm that, for any set of defective items $I\in {\cal I}$, with probability at least $2/3$, returns an integer ${\cal P}(I)$ such that\footnote{The constant $2$ can be replaced by any constant.} $|I|\le {\cal P}(I)\le 2|I|$.

\subsubsection{Old Technique:}
The method used by Bshouty in~\cite{Bshouty19} can be described as follows. Suppose we have a non-adaptive randomized algorithm ${\cal A}$ that makes $s$ tests, denoted by the random variable set ${\cal Q}=\{Q_1, \ldots, Q_s\}$. For every set of defective items $I\in {\cal I}$, the algorithm returns an integer ${\cal P}(I)$ such that, with probability at least $2/3$,  $|I|\le {\cal P}(I)\le 2|I|$. 

First, he defines a partition of the set of tests ${\cal Q} = \bigcup_{i=1}^r {\cal Q}^{(i)}$, where each ${\cal Q}^{(i)}$, $i\in [r]$, contains the set of tests of sizes in the interval $[n_i, n_{i+1}]$, with $n_0=1$ and $n_{i+1} = \text{poly}(\log n) \cdot n_i$.  There are $r=\Theta(\log n/\log\log n)$ such intervals. Let $c$ be a small constant. By Markov's bound, there exists $j$ (that depends only on ${\cal A}$, not the seed of ${\cal A}$) such that, with high probability (at least $1-1/c$),  $|{\cal Q}^{(j)}|\le c s/r$. 

He then identifies an integer $d$ that depends on $j$ (and therefore on ${\cal A}$) such that, for every $m\in [d,4d]$ and for {\it a uniform random} $I\in {\cal I}_m:=\{I\in {\cal I}: |I|=m\}$, the outcomes of all tests that lie outside ${\cal Q}^{(j)}$ can be determined (without having to perform a test) with high probability. This probability is obtained by applying the union bound to the probability that the answer to each test in ${\cal Q}\backslash {\cal Q}^{(j)}$ can be determined by a randomly selected $I\in {\cal I}':=\cup_{m\in [d,4d]}I_m$. The key idea here is that since ${\cal Q}^{(j)}=\{ q\in{\cal Q}:|Q|\in [n_j,poly(\log n)n_j]\}$, there is $d$ such that for a random uniform set of a defective item of size $m\in [d,4d]$, with high probability, the answers to all the tests $Q$ that satisfy $|Q|>poly(\log n)n_j$ are $1$, and, with high probability, the answers to all the tests $Q$ that satisfy $|Q|<n_j$ are $0$. 

This proves that the set of tests ${\cal Q}^{(j)}$ can, with high probability, estimate the size of a set of defective items of a uniformly random $I\in {\cal I}'$, i.e., $|I|\in [d,4d]$. In particular, it can, with high probability, distinguish\footnote{This is because the algorithm for $|I|=d$ return an integer in the interval $[d,2d]$ and for $|I|=4d$ returns an integer in the interval $[4d,8d]$ and both intervals are disjoint.} between a set of defective items of size $d$ and a set of defective items of size $4d$. 

If $c s/r<1$, then $|{\cal Q}^{(j)}|<1$, and therefore, with high probability, ${\cal Q}^{(j)}=\emptyset$. 
This leads to a contradiction. This is because the algorithm cannot, with high probability, distinguish between the case where $|I|=d$ and $|I|=4d$ without any tests. Therefore, $c s/r\ge 1$, resulting in the lower bound $s> r/c=\Omega(\log n/\log\log n)$ for any non-adaptive randomized algorithm that solves the estimation problem. 

\subsubsection{New Technique:} The union bound required for proving that the outcome of the tests in ${\cal Q}\backslash{\cal Q}^{(j)}$ can be determined with high probability necessitates a small enough value of $r$. Consequently, to satisfy the condition $c s/r<1$, $s$ must also be sufficiently small. This is the bottleneck in getting a better lower bound.

We surmount the bottleneck in this paper by implementing the following technique. Let $\tau=\log^*n$. As in the old technique, we define a partition of the set of tests ${\cal Q}=\cup_{i=1}^r{\cal Q}^{(i)}$.  By Markov's bound there is $j$ such that, with probability at least $1-1/\tau$,  $|{\cal Q}^{(j)}|\le \tau s/r$. Next, we identify a subset ${\cal I}'\subset {\cal I}$ such that, for a uniform random $I\in {\cal I}'$, the outcomes of all the tests that lie outside ${\cal Q}^{(j)}$ can be determined with a probability of at least $1-1/\tau$. 

We then present the following algorithm ${\cal A}'$ that, with high probability, solves the problem for defective sets $I\in {\cal I}'$: 

\begin{algorithm}[H]
\caption{Algorithm ${\cal A}'$ for solving problem ${\cal P}$}
\begin{algorithmic}[1]
\Require $I\in {\cal I}'$
\Ensure An estimate ${D}$ of $|I|$
\State Let $\phi:X\to X$ be a uniform random permutation.
\State Let ${\cal Q}$ be the set of tests that the algorithm ${\cal A}$ makes.
\State For every test that is outside ${\cal Q}^{(j)}$, return the determined answer.
\State If $|{\cal Q}^{(j)}|\le \tau s/r$, then make the tests in ${\cal Q}^{(j)}_\phi:=\{\phi(Q)|Q\in {\cal Q}^{(j)}\}$. Otherwise, FAIL.
\State Run the algorithm ${\cal A}$ with the above answers to get an estimation $D$ of $|\phi^{-1}(I)|$.
\State Return $D$.
\end{algorithmic}
\end{algorithm}
We then prove that if the algorithm ${\cal A}$ solves the estimation problem for any set of defective items $I$ with a success probability of at least $2/3$, then algorithm~${\cal A}'$ solves the estimation problem for $I\in {\cal I}'$ with a success probability of at least $2/3-2/\tau$. 

This follows from:
\begin{itemize}
\item Making the tests in ${\cal Q}^{(i)}_\phi$ with defective set of item $I$ is equivalent to making the tests in ${\cal Q}$ with defective set of items $\phi^{-1}(I)$, and therefore, with a random uniform defective set of size $|I|$.
\item Since $\phi^{-1}(I)$ is a random uniform set of size $|I|$, the answers to the queries outside ${\cal Q}^{(j)}$ can be determined with high probability.
\item With high probability, $|{\cal Q}^{(j)}|\le \tau s/r$.

\end{itemize}
Now, as before, if $\tau s/r<1$, then, with high probability, ${\cal Q}^{(j)}=\emptyset$, and the above algorithm does not require any tests to be performed. If, in addition, there are two instances $I_1$ and $I_2$ in ${\cal I}'$ where $|I_1|<4|I_2|$ (then, the outcome for $I_1$ cannot be equal to the outcome of $I_2$), then we get a contradiction. This is because if the algorithm makes no tests, it cannot distinguish between $I_1$ and $I_2$. This contradiction, again, gives the lower bound $r/\tau$ for any non-adaptive randomized algorithm that solves the estimation problem. 

To obtain a better lower bound, we again take the algorithm ${\cal A}'$ that solves ${\cal P}$ for $I\in {\cal I}'$ with the tests $Q':={\cal Q}^{(j)}$ with a success probability $2/3-2/\tau$ and, using the same procedure as before, we generate a new non-adaptive algorithm ${\cal A}''$ that solves ${\cal P}$ for $I\in {\cal I}''\subset {\cal I}'$ using the tests in ${\cal Q}''\subset {\cal Q}'$, with success probability $2/3-4/\tau$. The test complexity of the algorithm ${\cal A}''$ is $\tau^2s/(rr')$ where $r'=O(\log\log n/\log\log\log n)$ is the number of set partitions of ${\cal Q}'$. The lower bound obtained here is now $rr'/\tau^2=\Omega(\log n/((\log^*n)^2\log\log\log n))$, which is better than the lower bound $r/\tau$ obtained before\footnote{We can take $\tau$ as a small constant and get the lower bound $\Omega(\log n/\log\log\log n)$.}.

If this process is repeated $\ell:=\tau/24-\log^*\tau$ times, we get an algorithm that makes $t:=\tau^\ell s/(rr'r''\cdots)$ tests. If $t<1$, the algorithm makes no tests and, with a probability of at least\footnote{Unlike in the previous footnote, $\tau$ cannot be taken as constant here.} $2/3-2(\tau/24)/\tau=7/12>1/2$, it can distinguish between two sets of defective items $I_1$ and $I_2$ that cannot have the same outcome. This gives the lower bound $rr'r''\cdots/(\tau^{\tau/24})=\log n/(c'\tau)^{\tau/24}$ for some constant $c'$.

\subsubsection{Old Attempt}: An attempt was made to establish this bound in~\cite{Bshouty18}; however, an error was discovered in the proof. As a result, the weaker bound of $\Omega(\log n/\log\log n)$ was proved and published in~\cite{Bshouty19}. In~\cite{Bshouty18}, Bshouty did not use Algorithm~1. Instead, he performs the same analysis on ${\cal Q}^{(j)}$ (instead of ${\cal Q}^{(j)}_\phi$), which results in many dependent events in the proof. The key to the success of our analysis is the inclusion of the random permutation $\phi$ in Algorithm 1. This permutation makes the events independent, allowing us to repeat the same analysis for ${\cal Q}^{(j)}$.

\section{Definitions and Notation}
In this section, we introduce some definitions and notation.

We will consider the set of {\it items} $X=[n]=\{1,2,\ldots,n\}$ and the set of {\it defective items} $I\subseteq X$. The algorithm knows $n$ and has access to a test oracle~${\cal O}_I$. The algorithm can use the oracle ${\cal O}_I$ to make a {\it test} $Q\subseteq X$, and the oracle answers ${\cal O}_I(Q):=1$ if $Q\cap I\not=\emptyset$, and ${\cal O}_I(Q):=0$ otherwise. We say that an algorithm $A$, with probability at least $1-\delta$, $\alpha$-{\it estimates} the number of defective items if, for every $I\subseteq X$, $A$ runs in polynomial time in $n$, makes tests with the oracle ${\cal O}_I$, and with probability at least $1-\delta$, returns an integer $D$ such that $|I|\le D\le \alpha |I|$. If $\alpha$ is constant, then we say that the algorithm {\it estimates the number of defective items to within a constant factor}. 

The algorithm is called {\it non-adaptive} if the queries are independent of the answers of previous queries and, therefore, can be executed simultaneously in a single step. Our objective is to develop a non-adaptive algorithm that minimizes the number of tests and provides, with a probability of at least $1-\delta$, an estimation of the number of defective items within a constant factor.

We will denote $\log^{[k]}n=\log\log \stackrel{k}{\ldots}\log n$, $\log^{[0]}n=n$. Notice that $\log\log^{[i]}n=\log^{[i+1]}n$ and $2^{\log^{[i]}n}=\log^{[i-1]}n$. Let ${\mathbb N}=\{0,1,\cdots\}$. For two real numbers $r_1,r_2$, we denote $[r_1,r_2]=\{r\in \mathbb{N}|r_1\le r\le r_2\}$.
Random variables and random sets will be in bold. 

\section{The Lower Bound}

In this section, we prove the lower bound for the number of tests in any non-adaptive randomized algorithm that $\alpha$-estimates the number of defective items for any constant $\alpha$. 

\subsection{Lower Bound for Randomized Algorithm}
In this section, we prove.
\begin{theorem}\label{TH1} Let $\tau=\log^*n$ and $\alpha$ be any constant.
    Any non-adaptive randomized algorithm that, with probability at least $2/3$, $\alpha$-estimates the number of defective items must make at least
    $$\Omega\left(\frac{\log n}{(480\tau)^{\tau+1}}\right)$$
    tests. 
\end{theorem}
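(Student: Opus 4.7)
The plan is to prove Theorem~\ref{TH1} by iteratively applying the randomness-preserving reduction embodied in Algorithm~1. At each iteration, I strip away most of the tests of a given non-adaptive algorithm while keeping its success probability only slightly worse, until after enough rounds the remaining algorithm must use fewer than one test on average but still distinguish two defective-set sizes that differ by a factor larger than $\alpha$, a contradiction.

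Formally, suppose ${\cal A}_0={\cal A}$ is a non-adaptive randomized algorithm making $s$ tests ${\cal Q}_0$ that $\alpha$-estimates $|I|$ with probability at least $p_0=2/3$. At iteration $i$, I partition ${\cal Q}_i$ by test size into $r_i$ buckets ${\cal Q}_i^{(1)},\ldots,{\cal Q}_i^{(r_i)}$ whose size thresholds form a geometric sequence with ratio $\mathrm{poly}(\log^{[i]} n)$. By Markov's inequality applied to the distribution of ${\cal Q}_i$, there is an index $j$ depending only on this distribution (not on the random seed) such that $|{\cal Q}_i^{(j)}|\le \tau|{\cal Q}_i|/r_i$ with probability at least $1-1/\tau$. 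I then fix a family ${\cal I}_{i+1}\subset {\cal I}_i$ of defective-set sizes tailored to this bucket: the sizes are chosen so that, for a uniformly random $I$ in ${\cal I}_{i+1}$, every test below the $j$-th size threshold has $Q\cap I=\emptyset$ and every test above the $(j{+}1)$-st threshold has $Q\cap I\neq\emptyset$, each with probability $1-o(1/(|{\cal Q}_i|\tau))$; a union bound then fixes all outcomes outside ${\cal Q}_i^{(j)}$ with probability at least $1-1/\tau$. Now I invoke Algorithm~1: draw a uniform permutation $\bphi$, run the tests $\bphi({\cal Q}_i^{(j)})$, and fill in the determined answers for the other tests. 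Since $\bphi^{-1}(I)$ is a uniformly random set of size $|I|$, the whole run is distributionally identical to ${\cal A}_i$ on a uniform random defective set of that size, so the new algorithm ${\cal A}_{i+1}$ succeeds with probability at least $p_i-2/\tau$ and uses at most $\tau|{\cal Q}_i|/r_i$ tests.

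After $\ell=\tau/24-\log^*\tau$ iterations, the algorithm ${\cal A}_\ell$ uses at most $\tau^\ell s/(r_0r_1\cdots r_{\ell-1})$ tests and succeeds with probability at least $2/3-2\ell/\tau\ge 7/12$. The partitions at level $i$ contribute $r_i=\Theta(\log^{[i]}n/\log^{[i+1]}n)$ buckets, so the product telescopes over levels of iterated logarithms and, after the chosen number of rounds, gives $\prod_{i<\ell} r_i=\Omega(\log n/(c\tau)^{\log^*\tau})$ for some absolute constant $c$. I arrange the final family ${\cal I}_\ell$ to contain two sizes $d_1<d_2$ with $d_2> 2\alpha d_1$, so any $\alpha$-estimator must produce different outputs on them with probability above $1/2$. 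If $s<\prod_i r_i/\tau^\ell$, then the expected number of tests made by ${\cal A}_\ell$ is strictly less than $1$, so with positive probability it makes no tests yet still distinguishes $d_1$ from $d_2$ with probability above $1/2$ over its coins, which is impossible. Combining the factors $\tau^\ell$ and $\prod r_i$ and absorbing the constants yields $s=\Omega(\log n/(480\tau)^{\tau+1})$.

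The main obstacle I foresee is the coupled choice of the size scales and the defective-size families ${\cal I}_i$: the thresholds must separate fast enough that Chernoff/binomial tail bounds give outcome-determination failure at most $1/\tau$ even after the union bound over $|{\cal Q}_i|$ tests, yet slowly enough that $\prod_i r_i$ remains essentially $\log n$ up to the $(c\tau)^{\ell+1}$ overhead; simultaneously ${\cal I}_\ell$ must still contain sizes separated by a factor exceeding $2\alpha$. A secondary delicate point is maintaining probabilistic independence between successive iterations, which is exactly what the random permutation $\bphi$ in Algorithm~1 is designed to ensure and which is the fix relative to the earlier flawed attempt in~\cite{Bshouty18}.
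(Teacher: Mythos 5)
Your proposal is essentially the paper's own argument: partition tests by size with iterated-log thresholds, Markov to select a sparse bucket, tailor the defective-size family so that answers outside the bucket are determined, and crucially inject a uniformly random permutation (Algorithm~1) so that each iteration again faces a uniformly random defective set, allowing the reduction to be iterated; this is precisely the paper's Lemma~2 applied repeatedly. The only deviations are cosmetic — your constants (loss $2/\tau$ per round, $\ell=\tau/24-\log^*\tau$ rounds, and the product estimate $\prod r_i=\Omega(\log n/(c\tau)^{\log^*\tau})$) track the paper's informal introduction rather than its formal proof, which instead loses $1/(12\tau)$ per round, runs until $\log^{[\ell]}n\le\tau$ (so $\ell\approx\tau-\log^*\tau$), and keeps $r_i=\log^{[i]}n/(16\log^{[i+1]}n)$; reconciling those constants is bookkeeping, not a new idea.
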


We first prove the following.
\begin{lemma}\label{main}
Let $n_1=n$. Let $i\ge 1$ be an integer such that $\log^{[i]}n\ge \tau:=\log^*n$. Suppose there is an integer $n_i=n^{\Omega(1)}\le n$ and a non-adaptive randomized algorithm ${\cal A}_i$ that makes 
\begin{eqnarray}\label{si}
    s_i:=\frac{\log^{[i]} n}{(480\tau)^{\tau-i+2}}
\end{eqnarray} tests and
for every set of defective items $I$ of size $$d\in D_i:=\left[\frac{n}{n_i},\frac{n(\log^{[i-1]}n)^{1/4}}{n_i}\right],$$ with probability at least $1-\delta$, $\alpha$-estimates $d$. Then there is an integer $n_{i+1}=n^{\Omega(1)}\le n$ and a non-adaptive randomized algorithm ${\cal A}_{i+1}$ that makes $s_{i+1}$ tests and for every set of defective items $I$ of size $d\in D_{i+1}$, with probability at least $1-\delta-1/(12\tau)$, $\alpha$-estimates $d$. 
\end{lemma}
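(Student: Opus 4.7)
\noindent\textbf{Proof plan for Lemma~\ref{main}.} The plan is to carry out the inductive step sketched in the ``New Technique'' subsection. Let ${\cal A}_i$ be the assumed algorithm making $s_i$ tests and let ${\cal Q}$ denote its (random) test set. First I would partition ${\cal Q}=\bigcup_{j=1}^{r_{i+1}}{\cal Q}^{(j)}$ by test size, putting into ${\cal Q}^{(j)}$ the tests whose size lies in a geometric interval $[m_{j-1},m_j]$. The number of buckets $r_{i+1}$ and the growth factor $m_j/m_{j-1}$ are the two free parameters that need to be calibrated: $r_{i+1}$ must be large enough that the average bucket size $s_i/r_{i+1}$ can be driven below $s_{i+1}/(24\tau)$, forcing $r_{i+1}=\Theta(\tau s_i/s_{i+1})=\Theta(\log^{[i]}n/\log^{[i+1]}n)$, while each individual interval must still be wide enough that for the target size range $D_{i+1}$, every test outside the chosen bucket evaluates to a predictable constant answer with sufficiently high probability.

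Next, by Markov applied to $\sum_j|{\cal Q}^{(j)}|=s_i$, there is a fixed $j^\ast$ (depending only on ${\cal A}_i$, not its seed) with $\Pr[|{\cal Q}^{(j^\ast)}|>24\tau s_i/r_{i+1}]\le 1/(24\tau)$. Fix this $j^\ast$ and let $n_{i+1}$ be chosen so that $D_{i+1}=[n/n_{i+1},n(\log^{[i]}n)^{1/4}/n_{i+1}]$ sits in the correct relation to the boundaries $m_{j^\ast-1},m_{j^\ast}$ of the $j^\ast$-th bucket. For a uniformly random set $\boldsymbol{I}\subseteq X$ of any fixed size $d\in D_{i+1}$, a hypergeometric tail estimate (using that the expected intersection of a size-$q$ set with $\boldsymbol{I}$ is $qd/n$) gives that every test $Q$ with $|Q|\le m_{j^\ast-1}$ satisfies $Q\cap\boldsymbol{I}=\emptyset$ with probability at least $1-1/(48\tau s_i)$, while every test with $|Q|\ge m_{j^\ast}$ satisfies $Q\cap\boldsymbol{I}\ne\emptyset$ with the same margin. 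A union bound over the at most $s_i$ tests outside ${\cal Q}^{(j^\ast)}$ then gives that all these trivial predictions hold simultaneously except with probability $1/(24\tau)$.

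Finally I would invoke Algorithm~1: sample a uniformly random permutation $\bphi$ of $X$, actually execute the tests $\bphi(Q)$ for $Q\in {\cal Q}^{(j^\ast)}$ against the true oracle ${\cal O}_I$, fill in the remaining answers with the deterministic trivial values, and pass the resulting answer vector back to ${\cal A}_i$. The essential point of the permutation is that $\bphi^{-1}(I)$ is a uniformly random subset of $X$ of size $|I|$ for \emph{every fixed} $I$, so the ``random $\boldsymbol{I}$'' analysis of the previous paragraph applies to every input $I$ with $|I|\in D_{i+1}$ --- this is what the ``Old Attempt'' subsection identifies as the decisive step. Combining the three bad events --- the Markov violation of probability $1/(24\tau)$, the failure of the trivial predictions of probability $1/(24\tau)$, and the original $\delta$-failure of ${\cal A}_i$ on its effective input $\bphi^{-1}(I)$ (valid because $D_{i+1}\subseteq D_i$ by construction of $n_{i+1}$) --- the resulting algorithm ${\cal A}_{i+1}$ succeeds with probability at least $1-\delta-1/(12\tau)$ and uses at most $s_{i+1}$ tests. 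The main obstacle is the joint calibration of $r_{i+1}$, the geometric ratio, the endpoints of $D_{i+1}$, and the relation $n_{i+1}=n^{\Omega(1)}\le n$, so that the three tail bounds hold simultaneously at level $1/(24\tau)$; the factor $480$ inside the definition of $s_i$ is there precisely to absorb the hypergeometric constants cleanly at every level of the recursion.
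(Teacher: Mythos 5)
Your plan is correct and follows essentially the same approach as the paper's own proof: partition the tests by size into geometric buckets over the relevant range $N_i=[n_i/(\log^{[i-1]}n)^{1/4},n_i]$, use Markov to locate a light bucket $j^\ast$, answer all tests outside it by the trivially predictable constant, and introduce a uniformly random permutation so that the random-input tail bounds apply to every fixed input $I$. The paper's explicit calibration --- bucket ratio $(\log^{[i]}n)^4$, $r_i=\log^{[i]}n/(16\log^{[i+1]}n)$, and $n_{i+1}=n_i/(\log^{[i]}n)^{4j_i+2}$ taken at the geometric midpoint of the chosen bucket --- delivers exactly the three tail bounds you describe (with some margin to spare, since it gets per-test error $O(1/(\log^{[i]}n)^{1.75})$ rather than the $1/(48\tau s_i)$ you ask for).
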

\begin{proof} 
Let $$N_{i}=\left[\frac{n_i}{(\log^{[i-1]}n)^{1/4}},n_i\right].$$
We will be interested in all the queries $Q$ of the algorithm ${\cal A}_i$ that satisfies $|Q|\in N_i$. We now partition $N_i$ into smaller sets.  
Let $$N_{i,j}=\left[\frac{n_i}{(\log^{[i]}n)^{4j+4}},\frac{n_i}{(\log^{[i]}n)^{4j}}\right]$$
where $j=[0,r_i-1]$ and
\begin{eqnarray}\label{ri}
    r_i=\frac{\log^{[i]}n}{16\log^{[i+1]}n}.
\end{eqnarray}
Since the lowest endpoint of the interval $N_{i,r_i-1}$ is
$$\frac{n_i}{(\log^{[i]}n)^{4(r_i-1)+4}}=\frac{n_i}{2^{4r_i\log^{[i+1]}n}}=\frac{n_i}{2^{(1/4)\log^{[i]}n}}=\frac{n_i}{(\log^{[i-1]}n)^{1/4}}$$
and the right endpoint of $N_{i,0}$ is $n_i$, we have, $N_i=\cup_{j=0}^{r_i-1}N_{i,j}$. 

Let ${\mathbfcal{Q}}=\{\bQ_1,\ldots,\bQ_{s_i}\}$ be the (random variable) tests that the randomized algorithm ${\cal A}_i$ makes. Let $\bT_j$ be a random variable that represents the number of tests $\bQ\in \mathbfcal{Q}$ that satisfies $|\bQ|\in N_{i,j}$. Since ${\cal A}_i$ makes $s_i$ tests, we have $\bT_0+\cdots+\bT_{r_i}\le s_i$. Therefore, by (\ref{si}) and (\ref{ri}), (in the expectation, $\E_j$, $j$ is uniformly at random over $[0,r_i-1]$ and the other $\E$ is over the random seed of the algorithm~${\cal A}_i$)
$$\E_j\left[\E[\bT_j]\right]=\E\left[\E_j[\bT_j]\right]\le \frac{s_i}{r_i}=\frac{16 \log^{[i+1]}n}{(480\tau)^{\tau-i+2}}.$$
Therefore, there is $0\le j_i\le r_i-1$ that depends only on the algorithm ${\cal A}_i$ (not the seed of the algorithm) such that 
$$\E[\bT_{j_i}]\le \frac{16 \log^{[i+1]}n}{(480\tau)^{\tau-i+2}}.$$
By Markov's bound, with probability at least $1-16/(480\tau)=1-1/(30\tau)$,
\begin{eqnarray}\label{sip1}
    |\{{\bf Q}\in \mathbfcal{Q}:|{\bf Q}|\in N_{i,j_i}\}|=\bT_{j_i}\le \frac{\log^{[i+1]}n}{(480\tau)^{\tau-i+1}}=s_{i+1}.
\end{eqnarray}
Define
\begin{eqnarray}\label{nip1}
    n_{i+1}=\frac{n_i}{(\log^{[i]}n)^{4j_i+2}}.
\end{eqnarray}
Since $n_i=n^{\Omega(1)}$ and
$$(\log^{[i]}n)^{4j_i+2}\le (\log^{[i]}n)^{4r_i-2}=\frac{(\log^{[i-1]}n)^{1/4}}{(\log^{[i]}n)^2},$$
we have that $n_{i+1}=n^{\Omega(1)}\le n$. Notice that this holds even for $i=1$. This is because $n_1=n$ and $(\log^{[0]}n)^{1/4}=n^{1/4}$ so $n_2\ge n^{1/4}\log^2n=n^{\Omega(1)}$.

Consider the following randomized algorithm ${\cal A}_{i}'$:
\begin{enumerate}
    \item Let $\bcQ=\{\bQ_1,\ldots,\bQ_{s_i}\}$ be the set of tests of ${\cal A}_i$.
    \item Choose a uniformly at random permutation $\bphi:[n]\to [n]$.
    \item Let $\bcQ'=\{\bQ_1',\ldots,\bQ_{s_i}'\}$ where $\bQ_i'=\bphi(\bQ_i):=\{\bphi(q)|q\in \bQ_i\}$.
    \item Make all the tests in $\bcQ'$ and give the answer to ${\cal A}_i$.
    \item Run ${\cal A}_i$ with the above answers on $\bcQ'$ and output what ${\cal A}_i$ outputs.
\end{enumerate}

Consider the following algorithm ${\cal A}_{i+1}$:
\begin{enumerate}
    \item Let $\bcQ=\{\bQ_1,\ldots,\bQ_{s_i}\}$ be the set of tests of ${\cal A}_i$.
    \item Choose a uniformly at random permutation $\bphi:[n]\to [n]$.
    \item Let $\bcQ'=\{\bQ_1',\ldots,\bQ_{s_i}'\}$ where $\bQ_i'=\bphi(\bQ_i):=\{\phi(q)|q\in \bQ_i\}$.
    \item For all the tests in $$\bcQ_0:=\left\{\bQ_i'\in \bcQ':|\bQ_i'|\le \frac{n_i}{(\log^{[i]}n)^{4j_i+4}}\right\},$$
    answer $0$.
    \item For all the tests in $$\bcQ_1:=\left\{\bQ_i'\in \bcQ':|\bQ_i'|\ge \frac{n_i}{(\log^{[i]}n)^{4j_i}}\right\},$$
    answer $1$.
    \item Let \begin{eqnarray*}\bcQ''&=&\left\{\bQ_i'\in \bcQ':\frac{n_i}{(\log^{[i]}n)^{4j_i+4}}<|\bQ_i'|< \frac{n_i}{(\log^{[i]}n)^{4j_i}}\right\}\\
    &=&\{{\bf Q}_i'\in \mathbfcal{Q}':|{\bf Q}_i'|\in N_{i,j_i}\}\end{eqnarray*}
    \item If $$|\bcQ''|>s_{i+1}=\frac{\log^{[i+1]}n}{(480\tau)^{\tau-i+1}}$$ return $-1$ (FAIL) and halt.
    \item Make all the tests in $\bcQ''$.
    \item Run ${\cal A}_i$ with the above answers on $\bcQ'$ and output what ${\cal A}_i$ outputs. 
\end{enumerate}

We now show that for every set of defective items $|I|$ of size
\begin{eqnarray}\label{dee}
d\in D_{i+1}:=\left[\frac{n}{n_{i+1}},\frac{n(\log^{[i]}n)^{1/4}}{n_{i+1}}\right],
\end{eqnarray}
with probability at least $1-\delta-1/(12\tau)$, algorithm ${\cal A}_{i+1}$ $\alpha$-estimates $d$ using $s_{i+1}$ tests. 

In algorithm ${\cal A}_{i+1}$, Step 8 is the only step that makes tests. Therefore, by step~7 the test complexity of ${\cal A}_{i+1}$ is $s_{i+1}$. 

By the definition of $D_i$, and since, by (\ref{nip1}), $n/n_{i+1}>n/n_i$, and, by (\ref{nip1}) and~(\ref{ri})
$$\frac{n(\log^{[i]}n)^{1/4}}{n_{i+1}}=\frac{n}{n_i}(\log^{[i]}n)^{4j_i+2.25}\le \frac{n}{n_i}(\log^{[i]}n)^{4r_i-1.75}\le \frac{n(\log^{[i-1]}n)^{1/4}}{n_i},$$  we can conclude that $D_{i+1}\subset D_i$.

Consider the following events:
\begin{enumerate}
    \item Event $M_0$: For some $\bQ'=\bphi(\bQ)\in \bcQ'$ such that $|\bQ'|\le n_i/(\log^{[i]}n)^{4j_i+4}$ (i.e., $\bQ'\in \bcQ_0$),  $\bQ'\cap I\not=\emptyset$ (i.e., the answer to the test $\bQ$ is $1$).
    \item Event $M_1$: For some $\bQ'=\bphi(\bQ)\in \bcQ'$ such that $|\bQ'|\ge n_i/(\log^{[i]}n)^{4j_i}$, (i.e., $\bQ'\in \bcQ_1$) $\bQ'\cap I=\emptyset$ (i.e., the answer to the test $\bQ$ is $0$).
    \item Event $W$: $|\bcQ''|>s_{i+1}=\frac{\log^{[i+1]}n}{(480\tau)^{\tau-i+1}}.$
\end{enumerate}
The success probability of the algorithm ${\cal A}_{i+1}$ on a set $I$ of defective items with $|I|=d\in D_{i+1}$ is (here the probability is over $\boldsymbol{\phi}$ and the random tests $\mathbfcal{Q}$) 
\begin{eqnarray*}
    \Pr[{\cal A}_{i+1}\mbox{\ succeeds on\ }I]&=& \Pr[({\cal A}_{i}'\mbox{\ succeeds on\ }I)\wedge \bar M_0\wedge \bar M_1\wedge \bar W]\\
    &\ge & \Pr[{\cal A}_i'\mbox{\ succeeds on\ }I]-\Pr[M_0\vee M_1\vee W]\\
    &\ge & \Pr[{\cal A}_i'\mbox{\ succeeds on\ }I]-\Pr[M_0]-\Pr[M_1]-\Pr[W].
\end{eqnarray*}

Now, since $|\bphi^{-1}(I)|=|I|$ and $\bQ'\cap I=\bphi(\bQ)\cap I\not=\emptyset$ if and only if $\bQ\cap \bphi^{-1}(I)\not=\emptyset$,
\begin{eqnarray*}
\Prr{\boldsymbol{\phi},\mathbfcal{Q}}[{\cal A}_i'\mbox{\ succeeds on\ }I]=\Prr{\boldsymbol{\phi},\mathbfcal{Q}}[{\cal A}_i\mbox{\ succeeds on\ }\phi^{-1}(I)]\ge 1-\delta.
\end{eqnarray*}
Therefore, to get the result, it is enough to show that $\Pr[M_0]\le 1/(300\tau), $ $\Pr[M_1]\le 1/(300\tau)$ and $\Pr[W]\le 1/(30\tau)$.

First, since $|\bQ'|=|\bphi(\bQ)|=|\bQ|$ we have
$$|\bcQ''|=|\{\bQ_i':|\bQ_i'|\in N_{i,j_i}\}|=|\{\bQ_i:|\bQ_i|\in N_{i,j_i}\}|={\bf T}_{j_i}.$$
By (\ref{sip1}), with probability at most $1/(30\tau)$, $$|\bcQ''|={\bf T}_{j_i}>\frac{\log^{[i+1]}n}{(480\tau)^{\tau-i+1}}.$$ 
Therefore, $\Pr[W]\le 1/(30\tau)$.

We now will show that $\Pr[M_0]\le 1/(300\tau)$. We have, (A detailed explanation of every step can be found below.)
\begin{eqnarray} \Prr{\boldsymbol{\phi},\mathbfcal{Q}}[M_0]&=&\Prr{\boldsymbol{\phi},\mathbfcal{Q}}[(\exists \boldsymbol{Q}\in \mathbfcal{Q},\boldsymbol{\phi}(\boldsymbol{Q})\in \mathbfcal{Q}_0) \ \boldsymbol{\phi}(\boldsymbol{Q})\cap I\not=\emptyset]\label{up06}\\
&=&\Prr{\boldsymbol{\phi},\mathbfcal{Q}}[(\exists \boldsymbol{Q}\in \mathbfcal{Q},\boldsymbol{\phi}(\boldsymbol{Q})\in \mathbfcal{Q}_0) \ \boldsymbol{Q}\cap\boldsymbol{\phi}^{-1}(I)\not=\emptyset]\label{up07}\\
    &\le& s_i \left(1-\prod_{k=0}^{d-1}\left(1-\frac{n_i}{(\log^{[i]}n)^{4j_i+4}(n-k)}\right)\right)\label{up08}\\
    &\le& s_i \left(1-\left(1-\frac{2n_i}{(\log^{[i]}n)^{4j_i+4}n}\right)^d\right)\label{up09}\\
    &\le& s_i d\frac{2n_i}{(\log^{[i]}n)^{4j_i+4}n}\label{up10}\\
    &\le& \frac{\log^{[i]} n}{(480\tau)^{\tau-i+2}} \cdot \frac{n(\log^{[i]}n)^{1/4}}{n_{i+1}}\frac{2n_i}{(\log^{[i]}n)^{4j_i+4}n} \label{up11}\\
    &=& \frac{\log^{[i]} n}{(480\tau)^{\tau-i+2}} \cdot \frac{n(\log^{[i]}n)^{4j_i+2\frac{1}{4}}}{n_{i}}\frac{2n_i}{(\log^{[i]}n)^{4j_i+4}n} \label{up12}\\
    &=& \frac{2}{(480\tau)^{\tau-i+2}(\log^{[i]}n)^{3/4}}
    \le\frac{1}{300\tau}.\label{up13}
\end{eqnarray}
(\ref{up06}) follows from the definition of the event $M_0$. (\ref{up07}) follows from the fact that for any permutation $\phi:[n]\to[n]$ and two sets $X,Y\subseteq [n]$, $\phi(X)\cap Y\not =\emptyset$ is equivalent to $X\cap \Phi^{-1}(Y)\not=\emptyset$. (\ref{up08}) follows from: 
\begin{enumerate}
    \item The union bound and $|\bcQ_0|\le |\bcQ|= s_i$.
    \item For a random at uniform $\boldsymbol{\phi}$, and a $d$-subset of $[n]$, $\boldsymbol{\phi}^{-1}(I)$ is a random uniform $d$-subset of $[n]$.
    \item For every $\bQ\in \bcQ$ such that $\phi(\bQ)\in \bcQ_0$, $|\bQ|=|\phi(\bQ)|\le {n_i}/{(\log^{[i]}n)^{4j_i+4}}$.
\end{enumerate}
(\ref{up09}) follows from the fact that since $d\in D_{i+1}$ and $n_{i+1}=n^{\Omega(1)}$ by (\ref{dee}), we have $d\le n/2$. (\ref{up10}) follows from the inequality $(1-x)^d\ge 1-dx$. (\ref{up11}) follows from (\ref{si}) and (\ref{dee}). (\ref{up12}) follows from  (\ref{nip1}). (\ref{up13}) follows from the fact that since $\log^{[i]}n\ge \tau=\log^*n$, we have $i\le\tau$, and therefore $(480\tau)^{\tau-i+2}\ge (480\tau)^{2}\ge 600\tau$.

We now prove that $\Pr[M_1]\le 1/(300\tau)$. 
\begin{eqnarray}    \Prr{\boldsymbol{\phi},\mathbfcal{Q}}[M_1]&=&\Prr{\boldsymbol{\phi},\mathbfcal{Q}}[(\exists \boldsymbol{Q}\in \mathbfcal{Q},\boldsymbol{\phi}(\boldsymbol{Q})\in \mathbfcal{Q}_1) \ \boldsymbol{\phi}(\boldsymbol{Q})\cap I=\emptyset]\label{lo14}\\
&=&\Prr{\boldsymbol{\phi},\mathbfcal{Q}}[(\exists \boldsymbol{Q}\in \mathbfcal{Q},\boldsymbol{\phi}(\boldsymbol{Q})\in \mathbfcal{Q}_1) \ \boldsymbol{Q}\cap\boldsymbol{\phi}^{-1}(I)=\emptyset]\label{lo15}\\
    &\le& s_i \prod_{k=0}^{d-1}\left(1-\frac{n_i}{(\log^{[i]} n)^{4j_i}(n-k)}\right)\label{lo16}\\
    &\le& s_i \left(1-\frac{n_i}{(\log^{[i]} n)^{4j_i}n}\right)^d\le s_i\exp\left({-\frac{dn_i}{(\log^{[i]} n)^{4j_i}n}}\right)\label{lo17}\\
    &\le &\frac{\log^{[i]} n}{(480\tau)^{\tau-i+2}}\exp\left(-\frac{\frac{n}{n_{i+1}}\frac{n_i}{(\log^{[i]}n)^{4j_i}}}{n}\right)\ \label{lo18}\\
    &\le &\frac{\log^{[i]} n}{(480\tau)^{\tau-i+2}}\exp(-(\log^{[i]}n)^2)\label{lo19}\\
    &\le& \frac{1}{300\tau}.\label{lo20}
\end{eqnarray}
(\ref{lo14}) follows from the definition of the event $M_1$. (\ref{lo15}) follows from the fact that for any permutation $\phi:[n]\to[n]$ and two sets $X,Y\subseteq [n]$, $\phi(X)\cap Y =\emptyset$ is equivalent to $X\cap \Phi^{-1}(Y)=\emptyset$. In (\ref{lo16}), we again use the union bound, $|\bcQ_1|\le |\bcQ|= s_i$, the fact that $\phi^{-1}(I)$ is a random uniform $d$-subset, and for $\bQ'\in \bcQ_1$, $|\bQ'|\ge n_i/(\log^{[i]}n)^{4j_i}$. (\ref{lo17}) follows from the inequalities $(1-y/(n-k))\le (1-y/n)$ and $1-x\le e^{-x}$ for every $x$, and $y\ge 0$. (\ref{lo18}) follows from (\ref{si}) and (\ref{dee}). (\ref{lo19}) follows from (\ref{nip1}). (\ref{lo20}) follows from the fact that $i\le\tau$, and therefore $(480\tau)^{\tau-i+2}\ge (480\tau)^{2}\ge 300\tau$.\qed
\end{proof}

We are now ready to prove Theorem~\ref{TH1}.
\begin{proof}
Suppose, for the contrary, there is a non-adaptive randomized algorithm ${\cal A}_1$ that, with probability at least $2/3$, $\alpha$-estimates the number of defective items and makes
$$m:=\frac{\log n}{(480\tau)^{\tau+1}}$$
tests. Recall that $n_1=n$ and $\log^{[0]}n=n$. We use Lemma~\ref{main} with $\delta=1/3$, $D_1=[1,n^{1/4}]$ and $s_1=m$. %By (\ref{ri}), $r_1=\log n/(16\log\log n)$. 
%Since 
%$$n=n_1\ge n_{i+1}=\frac{n_i}{(\log^{[i]}n)^{4j_i+2}}\ge\frac{n_i}{(\log^{[i]}n)^{4r_i-2}}=\frac{n_i(\log^{[i]}n)^2}{(\log^{[i-1]}n)^{1/4}}, $$ and $\log^{[0]}n=n$, we have $n_2\ge n_1(\log n)^2/n^{1/4}=n^{3/4}(\log n)^2$ and
%$$n\ge n_{i+1}\ge n^{3/4}(\log^{[i]}n)^2\prod_{j=1}^{i-1}(\log^{[j]}n)^{1\frac{3}{4}}=n^{\Omega(1)}.$$
%The success probability of ${\cal A}_i$ is $2/3-i/(12\tau)$. 

Now let $\ell$ be an integer such that $\log\log^*n<\log^{[\ell]}n\le \log^*n=\tau$. Then $\log^{[\ell-1]}n>2^{\log^{[\ell]}n}>2^{\log\log^* n}=\tau$. Now use Lemma~\ref{main} with $i=\ell-1$ and get
$$s_\ell=\frac{\log^{[\ell]}n}{(480\tau)^{\tau-\ell+2}}\le \frac{\tau}{(480\tau)^2}<1.$$

So, algorithm ${\cal A}_\ell$ makes no tests and with probability at least $2/3-\ell/(12\tau)\ge 7/12>1/2$ $\alpha$-estimates the size of defective items $I$ provided that
$$|I|\in D_\ell=\left[\frac{n}{n_\ell},\frac{n(\log^{[\ell-1]}n)^{1/4}}{n_\ell}\right].$$
In particular, with probability more than $1/2$, we can distinguish between defective sets of size $n/n_\ell$ and of size greater than $\alpha n/n_\ell$ without performing any test, which is impossible. 
A contradiction.\qed
\end{proof}

\section{Conclusion}
In this paper, we have presented a novel approach to obtaining lower bounds in non-adaptive randomized group testing. Our technique has allowed us to establish a lower bound of $\Omega(\log n/((c\log^*n)^{\log^*n})$, for some constant $c$, on the test complexity of any randomized non-adaptive algorithm that estimates the number of defective items within a constant factor. This lower bound significantly improves upon the previous bound of $\Omega(\log n/\log\log n)$ that was established in~\cite{Bshouty19}.

The key to our success was the introduction of a random permutation $\phi$ in Algorithm 1, which enabled us to make the events independent and repeat the analysis to the set ${\cal Q}^{(j)}$. This crucial step allowed us to overcome the bottleneck in the previous technique and achieve a better lower bound.

An open problem is to establish a lower bound of $\Omega(\log n)$ for the test complexity in non-adaptive randomized algorithms that estimate the number of defective items within a constant factor.

\bibliography{LogNLB}
\bibliographystyle{plain}

\end{document}